
\documentclass[10pt, twocolumn, Journal]{IEEEtran}

\pdfminorversion=4 



\newcommand{\subparagraph}{}
\usepackage[compact]{titlesec}
\titlespacing*{\section}{8pt}{1.1\baselineskip}{1.0\baselineskip}

\titlespacing\section{0pt}{8pt plus 4pt minus 2pt}{5pt plus 2pt minus 2pt}
\titlespacing\subsection{0pt}{8pt plus 4pt minus 2pt}{3pt plus 2pt minus 2pt}
\titlespacing\subsubsection{0pt}{6pt plus 4pt minus 2pt}{3pt plus 2pt minus 2pt}

\setlength{\textfloatsep}{10pt plus 1.0pt minus 1.0pt}

\usepackage[numbers,sort&compress]{natbib}


\usepackage{amsmath, mathrsfs, mathtools}
\usepackage{amsfonts}
\usepackage{amssymb}

\usepackage[pdftex]{graphicx}
\usepackage{caption}
\usepackage{subfig, float}

\usepackage{color}
\usepackage{multirow}
\usepackage{stmaryrd}

\allowdisplaybreaks

\usepackage{amsfonts}
\usepackage{amssymb}
\usepackage{amsthm}
\usepackage{color}
\usepackage{multirow}


\usepackage{tikz}
\usepackage{pgfplots,tikz-3dplot}
\pgfplotsset{compat=newest}
\usetikzlibrary{patterns}
\usetikzlibrary{positioning}
\usetikzlibrary{datavisualization}
\usetikzlibrary{datavisualization.formats.functions}
\usetikzlibrary{backgrounds}
\usetikzlibrary{shapes,snakes}


%
\setlength\unitlength{1mm}

\long\def\comment#1{}







\def\Gammam{{\boldsymbol{\Gamma}}}

\newcommand{\Sigmam}{\boldsymbol{\Sigma}}

\newcommand{\Psim}{\boldsymbol{\Psi}}



\renewcommand{\det}{{\hbox{det}}}

\newcommand{\transp}{{\sf T}}



\usepackage{mathbbol}

\def\mindex#1{\index{#1}}



%
%

\def\sq{\hbox{\rlap{$\sqcap$}$\sqcup$}}
\def\qed{\ifmmode\sq\else{\unskip\nobreak\hfil
\penalty50\hskip1em\null\nobreak\hfil\sq
\parfillskip=0pt\finalhyphendemerits=0\endgraf}\fi\medskip}


\long\def\defbox#1{\framebox[.9\hsize][c]{\parbox{.85\hsize}{%
\parindent=0pt
\baselineskip=12pt plus .1pt      
\parskip=6pt plus 1.5pt minus 1pt 
 #1}}}


\long\def\beginbox#1\endbox{\subsection*{}%
\hbox{\hspace{.05\hsize}\defbox{\medskip#1\bigskip}}%
\subsection*{}}

\def\endbox{}


\def\diag{{\text{diag}}}

\def\tr{\mathsf{tr}}

\newsavebox{\junk}
\savebox{\junk}[1.6mm]{\hbox{$|\!|\!|$}}

\def\det{{\mathop{\rm det}}}

\def\argmin{\mathop{\rm arg\, min}}






\def\bC{{\mathbb C}}
\def\bD{{\mathbb D}}
\def\bE{{\mathbb E}}

\def\bR{{\mathbb R}}

\def\bfA{{\bf A}}

\def\bfC{{\bf C}}
\def\bfD{{\bf D}}

\def\bfI{{\bf I}}

\def\bfK{{\bf K}}
\def\bfL{{\bf L}}

\def\bfV{{\bf V}}

\def\bfX{{\bf X}}

\def\bfa{{\bf a}}

\def\bfc{{\bf c}}

\def\bfe{{\bf e}}

\def\bfh{{\bf h}}

\def\bfu{{\bf u}}
\def\bfv{{\bf v}}
\def\bfw{{\bf w}}
\def\bfx{{\bf x}}

\def\bfz{{\bf z}}





\def\ttE{{\mathtt{E}}}

\def\ttM{{\mathtt{M}}}

\def\ttS{{\mathtt{S}}}

\def\ttV{{\mathtt{V}}}


\def\sfF{{\sf F}}

\def\bfmath#1{{\mathchoice{\mbox{\boldmath$#1$}}%
{\mbox{\boldmath$#1$}}%
{\mbox{\boldmath$\scriptstyle#1$}}%
{\mbox{\boldmath$\scriptscriptstyle#1$}}}}




\def\bfmY{\bfmath{Y}}

\def\bfmhhaY{\bfmath{\hhaY}} 
\def\bfmhhaY{\hbox to 0pt{$\widehat{\bfmY}$\hss}\widehat{\phantom{\raise 1.25pt\hbox{$\bfmY$}}}}









\def\til={{\widetilde =}}



\def\clA{{\cal A}}

\def\clG{{\cal G}}
\def\clH{{\cal H}}

\def\clK{{\cal K}}

\def\clX{{\cal X}}






 \def\FRAC#1#2#3{\genfrac{}{}{}{#1}{#2}{#3}}

\def\ddtp{{\mathchoice{\FRAC{1}{d^{\hbox to 2pt{\rm\tiny +\hss}}}{dt}}%
{\FRAC{1}{d^{\hbox to 2pt{\rm\tiny +\hss}}}{dt}}%
{\FRAC{3}{d^{\hbox to 2pt{\rm\tiny +\hss}}}{dt}}%
{\FRAC{3}{d^{\hbox to 2pt{\rm\tiny +\hss}}}{dt}}}}

\def\average#1,#2,{{1\over #2} \sum_{#1}^{#2}}

\def\eye(#1){{\bf(#1)}\quad}


\newtheorem{theorem}{{\bf Theorem}}

\def\eq#1/{(\ref{e:#1})}

\newcommand{\beqn}[1]{\notes{#1}%
\begin{eqnarray} \elabel{#1}}

\newcommand{\eeqn}{\end{eqnarray} }

\newcommand{\beq}[1]{\notes{#1}%
\begin{equation}\elabel{#1}}

\newcommand{\eeq}{\end{equation}}

\def\bdes{\begin{description}}
\def\edes{\end{description}}




%

\newcounter{rmnum}

\newcounter{anum}


%
{\end{list}}

\def\ass(#1:#2){(#1\ref{#1:#2})}

\def\ritem#1{
\item[{\sf \ass(\current_model:#1)}]
}

\newenvironment{recall-ass}[1]{%
\begin{description}
\def\current_model{#1}}{
\end{description}
}



\def\bfMse{{\ttM\ttM\ttS\ttE}}

\def\gammam{\boldsymbol{\gamma}}

\usepackage{afterpage}
\usepackage{balance}

\pagenumbering{gobble}

\def\herm{{\sfT}}
\def\herm{{\dagger}}


\newcommand{\normd}[1]{{\left\vert\kern-0.25ex\left\vert\kern-0.25ex\left\vert #1 
    \right\vert\kern-0.25ex\right\vert\kern-0.25ex\right\vert}}

\def\argmin{\mathop{\rm arg\, min}}
\def\alg{{\ttM\ttM\ttV}}

%

\title{Multiple Measurement Vectors Problem: A Decoupling Property and its Applications}
\author{Saeid Haghighatshoar,  \IEEEmembership{Member, IEEE,} Giuseppe Caire,
	\IEEEmembership{Fellow, IEEE}%
\\
\vspace{1mm}

Emails: saeid.haghighatshoar@tu-berlin.de, caire@tu-berlin.de
	\thanks{The authors are with the Communications and Information Theory Group, Technische Universit\"{a}t Berlin (\{saeid.haghighatshoar, caire\}@tu-berlin.de).}
	\vspace{-4mm}
}


\begin{document}

\maketitle

\begin{abstract}
We  study a Compressed Sensing (CS) problem known as \textit{Multiple Measurement Vectors} (MMV) problem,  which arises in joint estimation of multiple signal realizations  when the signal samples have a common (joint) sparse support over a fixed known dictionary. Although there is a vast literature on the analysis of MMV, it is not yet fully known how the number of signal samples and their statistical correlations affects the performance of the joint estimation in MMV.
%
 Moreover,  in many instances  of MMV the underlying sparsifying dictionary may not be precisely known, and it is still an open problem to quantify how the dictionary mismatch may affect the estimation performance. 
 
 In this paper, we focus on \textit{$\ell_{2,1}$-norm regularized least squares} ($\ell_{2,1}$-LS) as a well-known and widely-used  MMV algorithm in the literature. We prove an interesting decoupling property for $\ell_{2,1}$-LS, where we show that it can be decomposed into two phases:   i)\,use all the signal samples to estimate the signal covariance  matrix (coupled phase), ii)\,plug in the resulting covariance estimate as the true covariance matrix into the \textit{Minimum Mean Squared Error} (MMSE) estimator to reconstruct each signal sample \textit{individually} (decoupled phase). As a consequence of this decomposition, we are able to provide further insights on the performance of $\ell_{2,1}$-LS for MMV. In particular, we address how the signal correlations and dictionary mismatch affects its   performance.
 Moreover, we show that by using the decoupling property one can obtain a variety of MMV algorithms with  performances even better than that of $\ell_{2,1}$-LS.
  We also provide numerical simulations to validate our theoretical results.  

\end{abstract}
\begin{keywords}
Compressed Sensing,  Sparsifying Dictionary, Multiple Measurement Vectors, Mean Squared Error. 
\end{keywords}

\vspace{-2mm}

\section{Introduction}
Efficient and reliable signal estimation in a variety of fields including biology, statistics, wireless communication, etc.  requires adopting sparsity in a suitable signal dictionary as signal prior and using techniques and recovery algorithms from Compressed Sensing (CS) \cite{donoho2006compressed, candes2006near}. To streamline the explanation of the main ideas in this paper, we start with and keep as an illustrative example throughout the paper a signal estimation problem encountered in wireless  MIMO systems, although as we will generalize later, the underlying signal model and our  results apply also to other similar signal processing problems. 

\noindent{\bf Illustrative Example.} We consider a wireless propagation model illustrated in Fig.\,\ref{array_fig}, where a BS is equipped with a \textit{Uniform Linear Array} (ULA) consisting of $n\gg 1$ antennas and serves single-antenna users. 
\begin{figure}[t]
	\centering
	\includegraphics[width=0.3\textwidth]{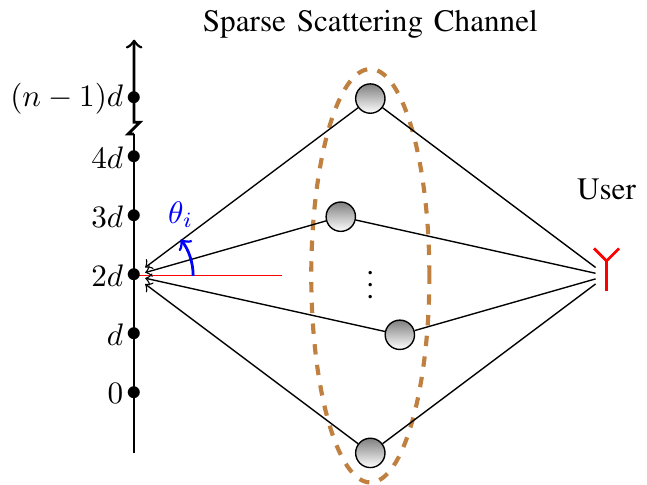}
	\caption{Illustration of a sparse signal model at a BS receiver with $n$ antennas produced by a scattering channel consisting of a sparse set of scatterers in the angular domain.  }
	\label{array_fig}
\end{figure}
Here, we assume that there are $T\gg 1$ resource blocks (signal samples) and the signal from a generic user received at $n$ BS antennas at resource block $s \in [T]:=\{1,\dots, T\}$ is given by \cite{tse2005fundamentals, haghighatshoar2017massive} 
\begin{align}\label{chan_vec1}
\bfh(s)= \sum_{i=1}^k w_i(s) \bfa(\xi_i),
\end{align}
where $w_i(s)$ denotes the channel gain of the $i$-th scatter, $i \in [k]$, at resource block $s$, where $\xi_i =\sin(\theta_i) \in \Xi:=[-1,1]$ parametrizes the  \textit{angle-of-arrival} (AoA) $\theta_i \in [-\pi, \pi]$ of the $i$-th scatterer, and where $\bfa(\xi) \in \bC^n$ denotes the array response to a planar wave coming from the AoA $\xi$ given by
\begin{align}\label{array_response}
\bfa(\xi)=(e^{j \pi \xi}, \dots, e^{j n \pi \xi})^\transp.
\end{align}
We consider a quasi-stationary scenario, as in most or typical MIMO and array processing problems relevant for communication\footnote{There is a vast literature on wideband array processing and also non-stationary scenarios where this assumption is not met, but it typically pertains to radar rather than communication.}, where we assume that the AoAs $\{\xi_i: i \in [k]\} \subset \Xi$ in signal model \eqref{chan_vec1} remain the same across all the signal samples $\clH:=\{\bfh(s): s\in [T]\}$ such that they all have the same support in the AoA domain $\Xi$, whereas the channel gains $w_i(s)$,  might vary considerably (even i.i.d.) across  different scatterers $i\in[k]$ and signal samples $s\in [T]$.

\noindent{\bf Joint Sparsity Model.} For the rest of the paper, we will use the signal model \eqref{chan_vec1} and will consider a stochastic  process $\clH:=\{\bfh(s): s\in [T]\}$ whose samples have a sparse representation over a common dictionary given by
\begin{align}\label{A_dict}
\clA:=\{\bfa(\xi): \xi \in \Xi\},
\end{align}
where the  atoms of the dictionary $\bfa(\xi)\in \bC^n$ are labelled with a finite or infinite set $\Xi$ (e.g., $\Xi= [-1,1]$ for the continuum set of AoAs). In particular, all signal samples share a common sparsity pattern or  support $\{\xi_i: i \in [k]\}$  over $\Xi$ of size $k \ll n$, where $n$ is the signal dimension. 


\subsection{Single and Multiple Measurement Vector problem} \label{sec:S-MMV}
In many signal processing problems, one needs to recover the signal $\bfh(s)$ as in \eqref{chan_vec1} from a set of $m \ll n$ linear and possibly noisy projections $\bfx(s)=\Psim(s) \bfh(s) + \bfz(s)$, where $\Psim(s) \in \bC^{m \times n}$ is the $m \times n$ projection matrix applied to the sample $s$ and where $\bfz(s)$ is the measurement noise. Low-dim projections ($m \ll n$) arise in signal processing problems due to having less (here $m$) measurement sensors than the signal dimension $n$.\footnote{For example, in our illustrative example in Fig.\,\ref{array_fig}, this happens when the receiver has  only $m \ll n$ antenna ports (RF demodulation chains and A/D converters) such that the $n$ antennas are connected with the $m$ antenna ports through a linear network of phase shifters, attenuators, and signal adders, which can be represented by an $m \times n$ matrix  $\Psim$ in the discrete-time complex baseband  representation \cite{rimoldi2016principles} of the signal obtained after A/D conversion.}
 With sparsity over the dictionary $\clA$ in \eqref{A_dict} as the signal prior, a well-known and widely used approach for signal reconstruction is to apply the $\ell_1$-norm regularized least squares ($\ell_1$-LS) minimization (generally known as LASSO in signal processing and statistics literature \cite{tibshirani1996regression}) given by
\begin{align}\label{l1_ls}
\widehat{\bfc}(s)=\argmin_{\bfc \in \bC^G} \frac{1}{2} \|\bfx(s) - \Psim(s) \bfA \bfc\|_2^2 + \varrho  \|\bfc\|_1,
\end{align}
where $\bfA=[\bfa(\xi^\circ_1)\,, \dots, \bfa(\xi^\circ_G)]$ is a finite dictionary matrix generally obtained by an approximation (finite dictionary) or a quantization (infinite dictionary) of the sparsifying dictionary $\clA$ in \eqref{A_dict} over a finite grid $\clG=\{\xi^\circ_1,  \dots, \xi^\circ_G\}$ of size $G$, where $\bfc=(c_1, \dots, c_G)^\transp  \in \bC^G$ denote the coefficients of the approximation of $\bfh(s)$ over the discrete dictionary $\bfA$, where $\|\bfc\|_1=\sum_{i\in [G]} |c_i|$ denotes the $\ell_1$-norm of $\bfc$, and where $\varrho>0$ is a regularization parameter. 
It is well-known in CS \cite{chen2001atomic, chandrasekaran2012convex} that the $\ell_1$-norm of $\bfc$ in the regularization term can be seen as a convex relaxation of the $\ell_0$-norm of $\bfc$, defined  by $\|\bfc\|_0=\big | \{i \in [G]: c_i \not = 0\} \big |$ as the number of nonzero elements of $\bfc$, and naturally promotes the sparsity of the coefficients in $\bfc$. As a result, in the ideal scenario where the grid $\clG$ contains the true signal support $\{\xi_1, \dots, \xi_k\} \subset \Xi$ (see, e.g., \eqref{chan_vec1}), we expect that the coefficient vector $\widehat{\bfc}(s)$ has non-zero values on the elements corresponding to the true atoms and is zero elsewhere. Hence, one can estimate the signal $\bfh(s)$ from the estimated sparse coefficients $\widehat{\bfc}(s)$ as $\widehat{\bfh}(s)=\bfA \widehat{\bfc}(s)$. 

One can run the optimization problem \eqref{l1_ls} to estimate each signal sample $\bfh(s)$ from its  corresponding noisy sketches $\bfx(s)=\Psim(s) \bfh(s) + \bfz(s)$ individually; this is known as the \textit{Single Measurement Vector} (SMV) problem in the literature. In the presence of joint sparsity, however, it is convenient to run a joint CS estimation problem to improve the recovery performance. This is known as the \textit{Multiple Measurement Vectors} (MMV) problem and is quite well studied in the literature \cite{tropp2006algorithms, tropp2006algorithms2, malioutov2005sparse, lee2012subspace, davies2012rank}. A well-known algorithm and an MMV variant of $\ell_1$-LS in \eqref{l1_ls} is $\ell_{2,1}$-norm regularized least squares ($\ell_{2,1}$-LS) 
\begin{align}\label{l21_ls}
\widehat{\bfC}=\argmin_{\bfC \in \bC^{G\times T}} \frac{1}{2} \sum_{s \in [T]} \|\bfx(s) - \Psim(s) \bfA \bfc(s)\|_2^2 + \varrho  \sqrt{T} \|\bfC\|_{2,1},
\end{align}
where $\widehat{\bfC}=\big [ \widehat{\bfc}(1),  \dots, \widehat{\bfc}(T) \big ]$ is the $G\times T$ matrix of estimated coefficients with $\widehat{\bfc}(s)$ denoting the estimated coefficient vector of the $s$-th signal $\bfh(s)$, and where $\|\bfC\|_{2,1}=\sum_{i\in [G]} \|\bfC_{i,:}\|_2$ is the sum of $\ell_2$-norm of rows of $\bfC$ with $\bfC_{i,:}$ denoting the $i$-th row of $\bfC$, and where $\varrho >0$ is a regularization parameter as before. Similar to $\ell_1$-LS, one can  estimate  the original signal samples as $\widehat{\bfh}(s)=\bfA \widehat{\bfc}(s)$, $s \in [T]$.
 It is known that the $\ell_{2,1}$-norm regularization promotes the row sparsity of the coefficient matrix $\bfC$, thus, it imposes a common (joint) sparsity pattern on the supports of the estimated signals, which is the desirable regularization to adopt in the MMV setups where the signal samples are jointly sparse.
\subsection{Related Works and Contribution}\label{sec:rel_work}
There is a vast literature in CS on MMV and signal estimation under joint/group sparsity; we may refer to \cite{tropp2006algorithms, tropp2006algorithms2, malioutov2005sparse, lee2012subspace, davies2012rank} for several earlier works and the refs. therein. A broad overview of the algorithms proposed for signal recovery in the MMV setup reveals that they generally belong to the following  two classes of algorithms. 
The first class applies greedy techniques or 
convex optimization using joint sparsity promoting regularizers  (e.g., $\ell_{2,1}$-norm regularization as in $\ell_{2,1}$-LS in \eqref{l21_ls}) to estimate the signals. The second class of algorithms use the sample covariance matrix of data and applies subspace techniques to extract the joint support. 
Once the support  is identified, the standard least squares method can be applied to find the corresponding coefficients. 
Our results in this paper, broadly speaking,  make a link between these two class of algorithms. More precisely, by deriving a \textit{decoupling property} in the specific case of $\ell_{2,1}$-LS belonging to the first class of algorithms, we show that $\ell_{2,1}$-LS can be decomposed into two steps where the first step can be seen merely as the estimation of the sparse covariance of the signal samples, which highly resembles the subspace techniques  applied in the second class of algorithms. 

All the classical algorithms for MMV require a finite signal dictionary, thus, when the signal dictionary is infinite (as in the continuum AoA in our illustrative example), one needs to run MMV on a finite quantized version of the original dictionary. This causes a mismatch when the signal support is not covered by the quantized dictionary and may dramatically degrade the performance \cite{chi2011sensitivity}. Recently off-grid MMV techniques have also been developed  
\cite{tang2013compressed, li2016off, yang2014exact} to avoid the dictionary mismatch.
Using the decoupling property for $\ell_{2,1}$-LS, we are able to fully quantify the effect of dictionary mismatch in terms of the convex cone produced by the dictionary, which provides further insights on the effect of dictionary quantization.

Recently a novel class of algorithms based on \text{Vector Approximate Message Passing} (VAMP) have been proposed for signal recovery in MMV. VAMP belongs to the Bayesian class of algorithms and requires knowing the probability distribution of the coefficients of signal samples (e.g., $\{w_i(s): s \in [T]\}$ in \eqref{chan_vec1} in our illustrative example). The  performance of VAMP for i.i.d. Gaussian dictionaries can be fully specified in the  large-dimensional setup by  a simple \textit{State Evolution} equation \cite{kim2011belief}, which yields an exact characterization of the performance in terms of parameters such as signal distribution and number of MMV samples $T$. In practice, however, signal dictionaries are structured and, in particular, far from i.i.d. Gaussian, and exact analysis of VAMP in those cases is still an open problem. The decoupling property for $\ell_{2,1}$-LS reveals that the  performance of MMV over any dictionary (even a structured one) hinges crucially on the quality of the covariance estimation phase. Although, in this paper we do not make further progress in this direction, we provide new guidelines on how one may be able to theoretically analyze the  performance of  MMV for such structured dictionaries.

 \subsection{Notation}
 We have already introduced some of the important notations. We use non-boldface letters for scalars (e.g., $x$), boldface letters for vectors (e.g. $\bfx$), and capital boldface letters for matrices (e.g., $\bfX$). We denote the $i$-th row and the $i$-th column of a matrix $\bfX$ with $\bfX_{i,:}$ and $\bfX_{:,i}$. The identity matrix of order $m$ is denoted by $\bfI_{m}$. We denotes a diagonal $G \times G$ matrix with diagonal elements $d_1, \dots, d_G$ by $\diag(d_1,\dots, d_G)$. We use calligraphic letters for sets (e.g., $\clX$) and denote their cardinality by $|.|$ (e.g., $|\clX|$). 
 
 \section{Overview of the Results}\label{sec:summary}
\subsection{Signal Samples as Side Information in MMV}
Let us consider the joint optimization \eqref{l21_ls}. As  explained before, we can estimate each signal sample $\bfh(s)$, $s\in[T]$, from the optimal solution $\widehat{\bfC}$ of \eqref{l21_ls} as $\widehat{\bfh}(s)=\bfA \widehat{\bfc}(s)$, where $\widehat{\bfc}(s)$ denotes the $s$-th column of $\widehat{\bfC}$. 
This can be interpreted as follows: $\ell_{2,1}$-LS  yields an MMV estimator $\widehat{\bfh}(s)=\alg(\bfx(s); \bfX_{:, \sim s})$ for each signal sample $s\in [T]$, which depends \textit{explicitly} on the corresponding sketch $\bfx(s)$ of signal sample $\bfh(s)$ as well as \textit{implicitly} on all the other sketches $\bfX_{:, \sim s}$, where $\bfX=[\bfx(1), \dots, \bfx(T)]$ denotes the $m \times T$ matrix of sketches, and where $\bfX_{:, \sim s}$ denotes the $m \times (T-1)$ matrix obtained after removing the $s$-th column of $\bfX$ corresponding to $\bfx(s)$. One can interpret $\bfX_{:, \sim s}$ as some sort of \textit{side information} from the other sketches, exploited by MMV in the estimation of $\bfh(s)$. Our goal in this paper is to better understand how this side information is used/processed by MMV. 

To explain this more clearly, let us consider as an alternative to MMV, a genie-aided \textit{Minimum Mean Square Error} (MMSE)\footnote{For simplicity, we focused on a Gaussian process $\clH$, for which the MMSE estimator is linear. The results remain valid for a non-Gaussian process by replacing the MMSE  by L-MMSE (linear MMSE), which depends only on the second-order statistics of the signal and takes on the same form for any signal statistics as far as the covariance of the target vector $\bfh(s)$ and cross-covariance between $\bfh(s)$ and its corresponding sketches $\bfx(s)$ remain the same \cite{kay1993fundamentals}.} 
estimator, denoted by $\bfMse$, that has a perfect knowledge of the covariance matrix of the samples $\Sigmam_\bfh=\bE[\bfh(s) \bfh(s)^\herm]$. 
In the conventional case, where the signal coefficients $\{w_i(s): i \in [k]\}$ are independent of each other, $\Sigmam_\bfh$ from \eqref{chan_vec1} is given by 
$\Sigmam_\bfh=\sum_{i=1}^k \gamma_i^\circ \bfa(\xi_i) \bfa(\xi_i) ^\herm$, where $\gamma_i^\circ=\bE[|w_i(s)|^2]$ denotes the strength of the $i$-th coefficient. Let us consider a scenario where the signal samples are i.i.d. Then, the estimate $\widehat{\bfh}(s)$ of such a genie-aided algorithm from all the available sketches $\bfX$ would be 
\begin{align}\label{mmse_intro}
\widehat{\bfh}(s)&=\bfMse(\bfh(s)|\bfX)\stackrel{(a)}=\bfMse(\bfh(s)| \bfx(s))\stackrel{(b)}{=}\Sigmam_{\bfh \bfx} \Sigmam_{\bfx}^{-1} \bfx(s)\nonumber\\
&=\Sigmam_\bfh \Psim(s)^\herm \Big (\Psim(s) \Sigmam_\bfh \Psim(s)^\herm + \sigma^2 \bfI_m \Big )^{-1} \bfx(s),
\end{align}
where $\Sigmam_{\bfx}=\bE[\bfx(s) \bfx(s)^\herm]$ and $\Sigmam_{\bfh \bfx}=\bE[\bfh(s) \bfx(s)^\herm]$ denote the covariance matrix of $\bfx(s)$ and the cross covariance matrix of $\bfh(s)$ and $\bfx(s)$ respectively (assuming i.i.d. noise samples of variance $\sigma^2$),  where $(a)$ follows from the independence of the signal samples, and where in $(b)$ we applied the well-known formula of MMSE estimator for Gaussian signals \cite{kay1993fundamentals}.

From \eqref{mmse_intro}, it is seen that, in contrast with the MMV which performs a joint estimation using all the sketches $\bfX$, the output $\widehat{\bfh}(s)$ of MMSE at each sample $s$  depends only on the sketch $\bfx(s)$ but not on the other sketches $\bfX_{:, \sim s}$. Therefore, we are curious to understand how the side information coming from the other samples is exploited by MMV and how it improves the estimation performance of an individual signal sample.

\subsection{Main Result: Decoupling Property of MMV}

\noindent{\bf MMV is a plug-in MMSE estimator:} For a fixed $s\in [T]$, let $\widehat{\bfh}(s)=\alg(\bfx(s); \bfX_{:, \sim s})$ be the MMV estimator of $\bfh(s)$ based on the input $\bfx(s)$ and the side information $\bfX_{:, \sim s}$. We prove that  MMV estimator can be decomposed into two parts: 
\begin{enumerate}
\item Use the sketch $\bfx(s)$ and all the side information $\bfX_{:, \sim s}$ to obtain an estimate $\widehat{\Sigmam}_{\bfh}$ of the covariance matrix of the samples $\clH=\{\bfh(s): s\in [T]\}$ (\textit{coupled phase}). 

\item Treat the resulting covariance estimate $\widehat{\Sigmam}_{\bfh}$ as the true covariance matrix $\Sigmam_\bfh$, plug it in the $\bfMse$ (see, e.g., \eqref{mmse_intro}), and find an estimate $\widehat{\bfh}(s)$ of each signal sample $\bfh(s)$ from its corresponding sketch $\bfx(s)$ \textit{individually} according to \eqref{mmse_intro} (\textit{decoupled phase}).
\end{enumerate}
We call this the \textit{decoupling property} of $\ell_{2,1}$-LS. 
Fig.\,\ref{fig:mmv_dec} illustrates this underlying coupled/decoupled phases for MMV. In the following, we will prove this property theoretically and discuss some of its crucial implications in  joint signal estimation through $\ell_{2,1}$-LS.

\begin{figure}[t]
\centering
	
	\subfloat[\label{dc_1}]{%
		\includegraphics[width=0.27\textwidth]{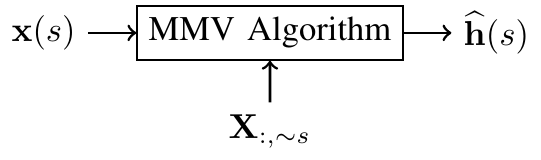}
        }
	
	\subfloat[\label{dc_2}]{%
		\includegraphics[width=0.45\textwidth]{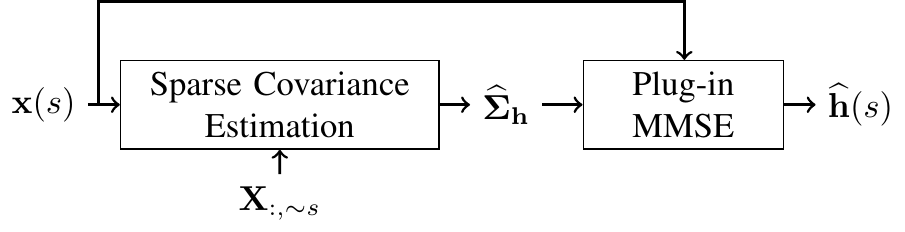}
	}
\caption{Illustration of MMV  from  perspective of a generic sample $\bfh(s)$: Fig.\,\ref{dc_1} treats the other samples $\bfX_{:, \sim s}$ as the side information to MMV, and Fig.\,\ref{dc_2} shows the underlying decomposition of MMV as a plug-in MMSE estimator. }
\label{fig:mmv_dec}
\end{figure}

\section{Main Results and Discussion}
\subsection{Main Theorems}
Let $\clH=\{\bfh(s): s\in [T]\}$ be the $n$-dim jointly sparse stochastic process as in signal model \eqref{chan_vec1} and let $\bfx(s)=\Psim(s) \bfh(s) + \bfz(s)$ be the collection of sketches of $\clH$ taken via $m \times n$ projection matrices $\Psim(s)$. 
For simplicity, we will assume that $\Psim(s) \Psim(s)^\dagger=\bfI_m$, where $\bfI_m$ denotes the identity matrix of order $m$.
Let us define $\ell_{2,1}$-LS cost function as in \eqref{l21_ls} by
\begin{align}\label{cost_mmv}
f(\bfC)=\frac{1}{2} \sum_{s \in [T]} \|\bfx(s) - \Psim(s) \bfA \bfc(s)\|_2^2 + \varrho  \sqrt{T} \|\bfC\|_{2,1},
\end{align}
where $\bfA$ denotes the $n\times G$ dictionary matrix and 
where $\bfC=[\bfc(1), \dots, \bfc(T)]$ denotes the $G \times T$ matrix of coefficients of $\clH$ over $\bfA$.

Let us also introduce the following convex cost function
\begin{align}\label{cost_gam}
g(\gammam)=& \frac{1}{T}\sum_{s\in [T]} \bfx(s)^\herm \big (\Psim(s)\bfA \Gammam \bfA^\herm \Psim(s)^\herm + \varrho \bfI_m \big )^{-1} \bfx(s)\nonumber\\
& \ \ + \tr(\Gammam),
\end{align}
where $\gammam=(\gamma_1, \dots, \gamma_G)^\herm \in \bR_+^G$ is a $G$-dim vector consisting of non-negative elements,  where $\Gammam:=\diag(\gammam)$ is a diagonal matrix with diagonal elements $\gammam$, and where $\tr(.)$ denotes the trace operator.
We first prove the following theorem which links the optimal solutions of the convex cost functions in \eqref{cost_mmv} and \eqref{cost_gam}.
\begin{theorem}\label{eq_thm}
Let $\widehat{\bfC}=\argmin_{\bfC \in \bC^{G \times T}} f(\bfC)$ and $\widehat{\gammam}=\argmin_{\gammam \in \bR_+^G} g(\gammam)$ be the optimal solutions of the convex functions $f(\bfC)$ and $g(\gammam)$ as in \eqref{cost_mmv} and \eqref{cost_gam}, respectively. Then, $\widehat{\gamma}_i=\frac{\|\widehat{\bfC}_{i,:}\|_2}{\sqrt{T}}$ for $i \in [G]$. \hfill$\square$
\end{theorem}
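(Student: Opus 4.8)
The plan is to realize both $f$ and $g$ as partial minimizations of a single jointly convex function $F(\bfC,\gammam)$ on $\bC^{G\times T}\times\bR_+^G$, and then to read off the claimed relation directly from the coordinates of the joint minimizer.

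First I would introduce the variational (perspective) form of the $\ell_2$ norm: for any row vector $\bfC_{i,:}\in\bC^{1\times T}$ one has, by the AM--GM inequality, $\varrho\sqrt{T}\,\|\bfC_{i,:}\|_2=\min_{\gamma_i>0}\big(\tfrac{\varrho}{2}\|\bfC_{i,:}\|_2^2/\gamma_i+\tfrac{\varrho T}{2}\gamma_i\big)$, with the minimizer attained at $\gamma_i=\|\bfC_{i,:}\|_2/\sqrt{T}$. Summing over $i\in[G]$ and appending the unchanged data-fidelity term defines
\begin{align}
F(\bfC,\gammam)=\frac{1}{2}\sum_{s\in[T]}\|\bfx(s)-\Psim(s)\bfA\bfc(s)\|_2^2+\frac{\varrho}{2}\sum_{i\in[G]}\frac{\|\bfC_{i,:}\|_2^2}{\gamma_i}+\frac{\varrho T}{2}\tr(\Gammam),
\end{align}
so that $\min_{\gammam\in\bR_+^G}F(\bfC,\gammam)=f(\bfC)$ holds identically, with inner minimizer $\gamma_i=\|\bfC_{i,:}\|_2/\sqrt{T}$. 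Each summand $\|\bfC_{i,:}\|_2^2/\gamma_i$ is the perspective of the convex map $\bfv\mapsto\|\bfv\|_2^2$, hence jointly convex on $\bC^{1\times T}\times\bR_{>0}$; combined with the convex quadratic data term and the linear trace term, $F$ is jointly convex. Consequently the two orders of partial minimization agree, and $F$ has a common minimizer $(\widehat{\bfC},\widehat{\gammam})$ whose $\bfC$-part solves $f$ and whose $\gammam$-part (through the reduction below) solves $g$.

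Next I would minimize $F$ over $\bfC$ with $\gammam$ held fixed. Using $\tfrac{\varrho}{2}\sum_i\|\bfC_{i,:}\|_2^2/\gamma_i=\tfrac{\varrho}{2}\sum_s\bfc(s)^\herm\Gammam^{-1}\bfc(s)$, the problem decouples across $s$ into ridge-regularized least squares, whose minimizing column is $\bfc(s)=(\bfA^\herm\Psim(s)^\herm\Psim(s)\bfA+\varrho\Gammam^{-1})^{-1}\bfA^\herm\Psim(s)^\herm\bfx(s)$. By the Woodbury/push-through identity the attained value of each summand equals $\tfrac{\varrho}{2}\bfx(s)^\herm(\Psim(s)\bfA\Gammam\bfA^\herm\Psim(s)^\herm+\varrho\bfI_m)^{-1}\bfx(s)$, so substituting back gives $\min_{\bfC}F(\bfC,\gammam)=\tfrac{\varrho T}{2}\,g(\gammam)$. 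Since $\varrho T/2>0$ is a constant, $\widehat{\gammam}=\argmin g$ coincides with the $\gammam$-component of the joint minimizer of $F$; combining this with the inner relation established in the first step yields $\widehat{\gamma}_i=\|\widehat{\bfC}_{i,:}\|_2/\sqrt{T}$, as claimed.

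The main obstacle is handling the boundary $\gamma_i\to0$ rigorously: when $\bfC_{i,:}=\zerov$ the term $\|\bfC_{i,:}\|_2^2/\gamma_i$ must be interpreted as the lower-semicontinuous perspective (value $0$ there, and $+\infty$ whenever the numerator is nonzero), which is precisely what preserves joint convexity of $F$ and legitimizes exchanging the two minimizations on $\bR_+^G$ rather than on $\bR_{>0}^G$. A secondary technical check is carrying out the Woodbury step in the complex Hermitian setting and tracking the $\varrho$ and $\sqrt{T}$ factors so the reduced cost matches $g$ exactly up to the positive constant $\varrho T/2$, which leaves the $\argmin$ unaffected.
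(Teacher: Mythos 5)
Your proposal is correct, and the constants check out exactly: the inner minimization over $\gamma_i$ recovers $\varrho\sqrt{T}\|\bfC_{i,:}\|_2$ with minimizer $\gamma_i=\|\bfC_{i,:}\|_2/\sqrt{T}$, and eliminating $\bfC$ for fixed $\gammam$ via ridge regression and Woodbury yields $\tfrac{\varrho T}{2}\,g(\gammam)$, so both argmins are read off the same joint minimizer. The paper reaches the same conclusion by a closely related but differently packaged lifting, following Theorem~1 of \cite{steffens2016compact}: it writes $\|\bfu\|_2=\min_{d\bfv=\bfu}\tfrac{1}{2}(\|\bfv\|_2^2+|d|^2)$, factorizes $\bfC=\bfD\bfV$ with $\bfD$ diagonal, eliminates $\bfV$ by least squares, and only then reparameterizes $\Gammam=\bfD\bfD^\herm/\sqrt{T}$ to land on $g$; the identity $\|\widehat{\bfC}_{i,:}\|_2=|d_i^*|^2$ gives the claimed relation. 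Your quadratic-over-linear (perspective) form is exactly what the paper's representation becomes after eliminating $\bfv=\bfu/d$ and substituting $\gamma=|d|^2/\sqrt{T}$, so the two proofs share the same Woodbury computation and the same final identification. What your route buys is that the lifted function $F(\bfC,\gammam)$ is \emph{jointly convex}, so the exchange of the two partial minimizations and the identification of the joint minimizer are immediate; the paper's intermediate problem in $(\bfV,\bfD)$ has a bilinear coupling and is not jointly convex, so strictly speaking its validity rests on the inner minimum being attained for each fixed $\bfC$ rather than on convexity. Your explicit treatment of the boundary $\gamma_i=0$ via the lower-semicontinuous perspective is also a point the paper glosses over (it handles the zero rows only in the proof of Theorem~\ref{mmse_eq_thm}, via the informal $\bfL_{i,i}=\infty$ convention). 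Both arguments implicitly assume the minimizers of $f$ and $g$ are unique (or are matched through the joint problem); neither proof addresses this, so it is not a gap relative to the paper.
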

\begin{proof}
Proof in Appendix \ref{eq_thm_app}.
\end{proof}
Our second theorem makes a more direct link between the optimal solution $\gammam$ of  $g(\gammam)$ in \eqref{cost_gam} and the signal estimates produced by $\ell_{2,1}$-LS via optimizing $f(\bfC)$ in \eqref{cost_mmv}.
\begin{theorem}\label{mmse_eq_thm}
Let $\widehat{\bfC}$ and $\widehat{\gammam}$ be as in Theorem \ref{eq_thm}. Also, let $\widehat{\bfh}(s)=\bfA \widehat{\bfc}(s)$ be the estimate of the signal sample $\bfh(s)$ produced by $\ell_{2,1}$-LS minimization of $\eqref{cost_mmv}$,  where $\widehat{\bfc}(s)$ denotes the $s$-th column of the optimal solution $\widehat{\bfC}$ of $\ell_{2,1}$-LS cost function $f(\bfC)$ in \eqref{cost_mmv}. Then, $\widehat{\bfh}(s)$ is given by 
\begin{align}\label{plugin_estim}
\widehat{\bfh}(s)=\bfA \widehat{\Gammam}\bfA^\dagger \Psim(s)^\herm \Big ( \Psim(s)\bfA \widehat{\Gammam}\bfA^\dagger \Psim(s)^\herm  + \varrho \bfI_m \Big )^{-1} \bfx(s),
\end{align}
 where $\bfx(s)=\Psim(s) \bfh(s) + \bfz(s)$ denotes the noisy sketch of $\bfh(s)$ and where $\widehat{\Gammam}=\diag(\widehat{\gammam})$. \hfill$\square$
\end{theorem}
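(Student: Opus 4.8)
The plan is to derive \eqref{plugin_estim} directly from the first-order optimality (subgradient) conditions of the convex $\ell_{2,1}$-LS problem \eqref{cost_mmv}, using Theorem \ref{eq_thm} to identify the row norms of $\widehat{\bfC}$ with the entries of $\widehat{\gammam}$. Since $f(\bfC)$ is convex, $\widehat{\bfC}$ is optimal iff $\zerov \in \partial f(\widehat{\bfC})$. Writing $\bfD(s) := \Psim(s)\bfA$ and the residual $\bfr(s) := \bfx(s) - \bfD(s)\widehat{\bfc}(s)$, the smooth (data) part contributes $-\bfD(s)^\dagger\bfr(s)$ to the stationarity condition in column $s$, while the $\ell_{2,1}$ regularizer contributes, in each row $i$ with $\widehat{\bfC}_{i,:}\neq\zerov$, the normalized row direction $\varrho\sqrt{T}\,\widehat{\bfC}_{i,s}/\|\widehat{\bfC}_{i,:}\|_2$ (the Wirtinger factors of $\tfrac12$ on both terms cancel). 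Thus the $(i,s)$ stationarity equation reads $[\bfD(s)^\dagger\bfr(s)]_i = \varrho\sqrt{T}\,\widehat{\bfC}_{i,s}/\|\widehat{\bfC}_{i,:}\|_2$.

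First I would invoke Theorem \ref{eq_thm}, which gives $\|\widehat{\bfC}_{i,:}\|_2 = \sqrt{T}\,\widehat{\gamma}_i$, to rewrite this as $\widehat{\gamma}_i\,[\bfD(s)^\dagger\bfr(s)]_i = \varrho\,\widehat{\bfC}_{i,s}$ for every $i$ and $s$. Collecting the rows into the diagonal matrix $\widehat{\Gammam}=\diag(\widehat{\gammam})$ yields the compact relation
\begin{align}\label{compact_kkt}
\widehat{\Gammam}\,\bfA^\dagger\Psim(s)^\dagger\,\bfr(s) = \varrho\,\widehat{\bfc}(s).
\end{align}
Substituting $\bfr(s)=\bfx(s)-\Psim(s)\bfA\widehat{\bfc}(s)$ and rearranging isolates $\widehat{\bfc}(s)$ as
\begin{align}\label{chat_sol}
\widehat{\bfc}(s) = \big(\varrho\bfI_G + \widehat{\Gammam}\bfA^\dagger\Psim(s)^\dagger\Psim(s)\bfA\big)^{-1}\widehat{\Gammam}\bfA^\dagger\Psim(s)^\dagger\,\bfx(s),
\end{align}
after which $\widehat{\bfh}(s)=\bfA\widehat{\bfc}(s)$ is immediate.

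The last step is to transform \eqref{chat_sol} into the MMSE form \eqref{plugin_estim} via the push-through identity $(\varrho\bfI_G + \Um\Vm)^{-1}\Um = \Um(\varrho\bfI_m + \Vm\Um)^{-1}$ applied with $\Um=\widehat{\Gammam}\bfA^\dagger\Psim(s)^\dagger$ and $\Vm=\Psim(s)\bfA$; multiplying on the left by $\bfA$ then reproduces \eqref{plugin_estim} exactly. Both inverses are well defined because $\varrho>0$ and the matrices $\widehat{\Gammam}\bfA^\dagger\Psim(s)^\dagger\Psim(s)\bfA$ and $\Psim(s)\bfA\widehat{\Gammam}\bfA^\dagger\Psim(s)^\dagger$ have real, nonnegative spectra (the former shares its nonzero eigenvalues with the positive semidefinite matrix $\widehat{\Gammam}^{1/2}\bfA^\dagger\Psim(s)^\dagger\Psim(s)\bfA\widehat{\Gammam}^{1/2}$).

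I expect the main obstacle to be the careful treatment of the non-smooth and possibly singular case: when $\widehat{\bfC}_{i,:}=\zerov$ (equivalently $\widehat{\gamma}_i=0$), the $\ell_2$-norm is non-differentiable and the subgradient is only constrained to a ball, while $\widehat{\Gammam}$ becomes singular. I would handle this by observing that \eqref{compact_kkt} nevertheless holds verbatim on such rows, since both sides vanish there ($\widehat{\gamma}_i=0$ forces $\widehat{C}_{i,s}=0$), so no inversion of $\widehat{\Gammam}$ is ever needed and the derivation goes through unchanged. The remaining bookkeeping --- fixing the complex (Wirtinger) subgradient convention so that the constants match the $\varrho\sqrt{T}$ in \eqref{cost_mmv} and the $\varrho$ in \eqref{plugin_estim} --- is routine once the cancellation of the $\tfrac12$ factors is noted.
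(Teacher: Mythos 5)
Your proposal is correct and follows essentially the same route as the paper's proof: column-wise first-order optimality of \eqref{cost_mmv}, identification of $\|\widehat{\bfC}_{i,:}\|_2$ with $\sqrt{T}\,\widehat{\gamma}_i$ via Theorem \ref{eq_thm}, and a Woodbury/push-through identity to pass from the normal-equations form to the MMSE form \eqref{plugin_estim}. Your handling of the zero rows (multiplying through by $\widehat{\Gammam}$ so that no inverse of $\widehat{\Gammam}$ or of a matrix with infinite entries is ever taken) is a cleaner rendering of the paper's ``$\bfL_{i,i}=\infty$'' abuse of notation, but it is the same argument.
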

\begin{proof}
Proof in Appendix \ref{mmse_eq_thm_app}.
\end{proof}

\subsection{MMV as a plug-in MMSE estimator}
Using Theorem \ref{eq_thm} and \ref{mmse_eq_thm}, we are now in a position to illustrate the decoupling property of $\ell_{2,1}$-LS algorithm for MMV as claimed in Section \ref{sec:summary} (see also Fig.\,\ref{fig:mmv_dec}). For simplicity of explanation, we will assume a matched scenario where all the signal samples can be written as a sparse linear combination of the columns of the dictionary matrix $\bfA$ as $\bfh(s)=\bfA \bfw(s)$ with some sparse vector of coefficients $\bfw(s) \in \bC^G$ (see also the signal model in \eqref{chan_vec1}). Assuming that the elements of $\bfw(s)=(w_1(s), \dots, w_G(s))^\transp$ are independent with strengths $\gamma^\circ_i=\bE[|w_i(s)|^2]$, one can see that each signal sample $\bfh(s)$ and its noisy sketch  $\bfx(s)=\Psim(s) \bfh(s) + \bfz(s)$ have the following covariance matrices 
\begin{align}\label{cov_mats}
\Sigmam_\bfh=\bfA \Gammam^\circ \bfA^\herm, \ \Sigmam_{\bfx(s)}=\Psim(s)\bfA \Gammam^\circ \bfA^\herm \Psim(s)^\herm + \sigma^2 \bfI_m,\end{align}
 where $\Gammam^\circ=\diag(\gammam^\circ)$ with $\gammam^\circ=(\gamma_1^\circ, \dots, \gamma_G^\circ)^\transp$, and where $\sigma^2$ denotes the noise variance.
%
Using Theorem \ref{eq_thm} and \ref{mmse_eq_thm} and also \eqref{cov_mats}, we can interpret the $\ell_{2,1}$-LS estimator as follows:
\begin{itemize}
\item[{a)}] Obtain an estimate $\widehat{\gammam}$ of the vector $\gammam^\circ$ by optimizing the cost function $g (\gammam)$ in \eqref{cost_gam}, with a regularization $\varrho=\sigma^2$.

\item[{ b)}] Use the estimate $\widehat{\gammam}$ to compute an estimate of the true covariance matrices  in \eqref{cov_mats} as $\Sigmam_\bfh=\bfA \widehat{\Gammam} \bfA^\herm$, and  $\Sigmam_{\bfx(s)}=\Psim(s)\bfA \widehat{\Gammam} \bfA^\herm \Psim(s)^\herm + \varrho \bfI_m$, where $\widehat{\Gammam}=\diag(\widehat{\gammam})$.

\item [{ c)}] Use the estimated covariance matrices in place of the true ones, in a ``plug-in'' MMSE estimator as in \eqref{mmse_intro}, to recover each signal sample $\bfh(s)$  from its corresponding sketches $\bfx(s)$ individually, thus, yielding  \eqref{plugin_estim} in Theorem \ref{mmse_eq_thm}.
\end{itemize}
This  illustrates that $\ell_{2,1}$-LS can be decomposed into a coupled covariance estimation phase (steps a and b) followed by a decoupled signal estimation phase (step c)  (see, e.g., Fig.\,\ref{fig:mmv_dec}). 
This underlying decoupling property also indicates that although regularization based and sample covariance (subspace) based MMV recovery algorithms overviewed in Section \ref{sec:rel_work} seem quite different, they are indeed very related through the covariance estimation phase (steps a and b).

\subsection{Further insights on the decoupling property}
From the decoupling property for $\ell_{2,1}$-LS,  we can gain  several insights on the qualitative performance of $\ell_{2,1}$-LS in the MMV setup as follows:

{\bf i)} The cost function $g(\gammam)$ in \eqref{cost_gam}  has $\tr(\Gammam)=\sum_{i\in [G]} \gamma_i$, i.e.,  the $\ell_1$-norm of  $\gammam$, as the regularizer, which  is known to promote the sparsity of $\gammam$. Thus, we expect that the optimal solution $\widehat{\gammam}$ of $g(\gammam)$ will have only a few significantly large coefficients. Interestingly, this is consistent with the result of Theorem \ref{eq_thm}, where $\widehat{\gamma}_i=\frac{\|\widehat{\bfC}_{i,:}\|_2}{\sqrt{T}}$ and the sparsity of $\widehat{\gammam}$ implies the row-sparsity of the optimal solution of $\ell_{2,1}$-LS in \eqref{cost_mmv}, which in turn follows from the $\ell_{2,1}$-norm regularization.

{\bf ii)} To gain a better understanding of the coupled covariance estimation phase, let us assume a case where  the coefficients have a Gaussian distribution and all the signal coefficients $\{\bfw(s): s\in [T]\}$ are  independent from each other all having the same Gaussian distribution. Then, a straightforward computation shows that one can write the \textit{Maximum Likelihood} (ML) estimate \cite{kay1993fundamentals} of the sparse vector $\gammam^\circ$ (recall that $\gamma^\circ_i=\bE[|w_i(s)|^2]$ denotes the strength of the $i$-th signal coefficient) from the available sketches {$\{\bfx(s): s \in [T]\}$}, which are also independent Gaussian vectors, as the minimization of the following cost function:
\begin{align}
l(\gammam)&=\frac{1}{T} \sum_{s\in [T]} \bfx(s)^\herm \big (\Psim(s)\bfA \Gammam \bfA^\herm \Psim(s)^\herm + \sigma^2 \bfI_m \big )^{-1} \bfx(s) \nonumber \\
& \ \ \ + \log\, \det\big (\Psim(s)\bfA \Gammam \bfA^\herm \Psim(s)^\herm + \sigma^2 \bfI_m \big),\label{cost_ml}
\end{align}
where $\Gammam=\diag(\gammam)$ and where $\det(.)$ denotes the determinant operator. 
From \eqref{cost_ml}, one can see the striking similarity  between the cost function $g(\gammam)$ in \eqref{cost_gam} and ML cost function $l(\gammam)$. One can also note the fundamental difference that the $\log \det$ regularization in \eqref{cost_ml} is replaced by the trace regularization in \eqref{cost_gam} to explicitly promote the sparsity of the vector $\gammam$.

{\bf iii)} The assumptions made for the derivation of  \eqref{cost_ml} reveals that the  $\ell_{2,1}$-norm regularization is in fact unable to capture the statistical correlation among the elements of signal coefficient, i.e., $\{w_i(s): i\in[G]\}$ in $\bfw(s)=(w_1(s), \dots, w_G(s))^\transp$ (spatial correlation) and the correlation among different signal coefficients $\bfw(s)$, $s \in [T]$ (temporal correlation).
In view of the decoupling property derived in Theorem \ref{eq_thm} and \ref{mmse_eq_thm}, in principle, one can improve the performance of  $\ell_{2,1}$-LS by modifying the covariance estimation phase (e.g., as in the cost function \eqref{cost_gam} and \eqref{cost_ml}).
In fact, via this modification, one may obtain a whole variety of MMV schemes consisting of an improved covariance estimation (not necessarily \eqref{cost_gam} or \eqref{cost_ml}), which take into account not only the joint sparsity but also the spatial  signal correlations, followed by the plug-in MMSE estimator as in \eqref{plugin_estim} to estimate signal samples individually.
Notice also that such a family of MMV estimators would  be effective to capture only spatial correlation among the elements of each sparse vector, but not really temporal correlation among different signal samples.

With temporal correlations, one would perhaps need a more involved spatial-temporal covariance estimator to estimate the large-dim $n T \times n T$ joint covariance matrix of all the signal samples $\{\bfh(s): s\in [T]\}$ consisting of $T$ samples, thus,  of $nT$ random variables (each $\bfh(s)$ has dimension $n$), followed by perhaps a  plug-in MMSE estimator applied jointly to all the signal samples. Investigating the connection between such an MMV estimator and other conventional MMV estimators such as $\ell_{2,1}$-LS and deriving a counterpart of decoupling property in those temporally-correlated scenarios is a very interesting problem, but it goes beyond the scope of this paper.

{\bf iv)} One can also see that $\ell_{2,1}$-LS exploits only the second order statistics of the signal but not higher-order ones, thus, it naturally suits Gaussian signals. For non-Gaussian signals, in contrast, one may be able to exploit the higher order signal statistics to obtain better  performances (as is done in Bayesian algorithms such as VAMP discussed in Section \ref{sec:rel_work}).

\subsection{Effect of over-parametrization and dictionary mismatch}\label{sec:dic_mismatch}
As  explained in Section \ref{sec:S-MMV}, when the signal dictionary has infinitely many elements, there may be a mismatch  in $\ell_{2,1}$-LS optimization in \eqref{l1_ls} due to using the quantized dictionary $\bfA$ since the quantization grid $\clG=\{\xi^\circ_1, \dots, \xi^\circ_G\}$ may not contain the true support $\{\xi_1, \dots, \xi_k\}$ (see, e.g., \eqref{chan_vec1}). This  mismatch may also arise even with finite dictionaries when the true sparsifying dictionary is not known precisely \cite{chi2011sensitivity}.

A natural way to reduce this mismatch is to increase the dictionary size by adding additional elements (e.g., increasing the grid size $G$ for a continuous dictionary). However, it is generally believed that enlarging the dictionary size $G$ degrades the recovery performance in a  CS problem such as \eqref{l1_ls} or \eqref{l21_ls}. The underlying reasoning is that by increasing $G$ the columns of the corresponding dictionary matrix $\bfA$ become highly correlated. In addition, sparse estimation under a larger dictionary requires estimating more parameters. For example, for \eqref{l21_ls}, this typically results in more spurious rows in the resulting estimate $\widehat{\bfC}$ and, at first sight, seems to naturally degrade the recovery performance of $\ell_{2,1}$-LS.  

Using the decoupling property of $\ell_{2,1}$-LS, we can now show that although this may, in fact, be true for the recovery of $\widehat{\bfC}$, it does not necessarily hold when the recovery of the original signal samples $\clH=\{\bfh(s): s\in [T]\}$ is of interest. To show this, we need some notation first. 
Let  $\bfA=[\bfa(\xi^\circ_1)\, , \dots, \bfa(\xi^\circ_G)]$ be a finite (or quantized) dictionary consisting of $G$ columns.
We define the cone generated by the columns of $\bfA$ by 
\begin{align}
 \clK_{\bfA}^+=\Big \{\sum_{i \in [G]} p_i \bfa(\xi^\circ_i)\bfa(\xi^\circ_i)^\dagger: p_i \geq 0\Big \}.
\end{align}
Note that $\clK_{\bfA}^+$ is a convex sub-cone of the cone of \textit{positive semi-definite} (PSD) matrices generated by rank-1 matrices $\big\{\bfa(\xi) \bfa(\xi)^\dagger: \xi \in \{\xi_1^\circ, \dots, \xi_G^\circ \} \big \}$. We call a set $\clK \subseteq \bC^{n \times n}$ a convex cone if any positive linear combination of the elements of $\clK$ belongs to $\clK$
 \cite{rockafellar2015convex}. 

A direct inspection in Theorem \ref{mmse_eq_thm} and \eqref{plugin_estim}, reveals that due to the decoupling property of $\ell_{2,1}$-LS,  the dictionary matrix $\bfA$ affects the recovery performance of signal samples  only through the covariance estimate $\widehat{\Sigmam}_{\bfh} =\bfA \widehat{\Gammam} \bfA^\herm$, which belongs to the cone $\clK_{\bfA}^+$ produced by $\bfA$ (note that $\widehat{\Gammam}$ is a diagonal matrix with non-negative diagonal elements). Moreover, assuming without loss of generality that the columns (atoms) of $\bfA$ all have the same $\ell_2$-norm $\zeta:=\|\bfA_{:,i}\|_2$, $i\in [G]$,  using the fact that $\tr(\bfA \Gammam \bfA^\herm)=\zeta^2\,  \tr(\Gammam)$, and by introducing the change of variable $\bfK:=\bfA \Gammam \bfA^\herm$, we can reformulate the optimization of $\Gammam$ in \eqref{cost_gam} more directly in terms of the covariance matrix of the signal $\Sigmam_\bfh$ as follows:
\begin{align}\label{cost_gam2}
\widehat{\Sigmam}_{\bfh}=\argmin _{\bfK \in \clK_{\bfA}^+}  \frac{1}{T} \sum_{s\in [T]} \bfx(s)^\herm & \big(\Psim(s)\bfK \Psim(s)^\herm + \varrho \bfI_m \big)^{-1} \bfx(s) \nonumber\\
&+ \frac{1}{\zeta^2} \tr(\bfK).
\end{align}
This immediately shows that the underlying dictionary can affect  $\ell_{2,1}$-LS only via the covariance estimation phase in  \eqref{cost_gam2}, which depends on the cone $\clK_{\bfA}^+$ produced by $\bfA$. In particular, letting 
$\widetilde{\bfA}$ be a larger dictionary matrix consisting of $\widetilde{G}>G$ columns that contains all $G$ columns of the original dictionary $\bfA$, one can easily see that using the larger dictionary $\widetilde{\bfA}$ does not degrade the performance provided that the additional elements (columns) of $\widetilde{\bfA}$ do not widen/enlarge the convex cone $\clK_{\widetilde{\bfA}}^+$ considerably compared with the original cone $\clK_{\bfA}^+$.

Using the cone produced by the dictionary, we can also obtain a better understanding of the quantization effect in scenarios where the underlying dictionary has infinitely many elements. For such a dictionary $\clA$, we define by  
\begin{align*}
\clK_{\clA}^+:=\Big \{\sum_{\xi \in \Xi'} p_\xi \bfa(\xi) \bfa(\xi)^\herm: p_\xi> 0, \Xi' \subseteq \Xi, |\Xi'|< \infty \Big\},
\end{align*}
the cone generated by $\clA$, where the summation is taken over all possible subsets $\Xi'$ of the labelling set $\Xi$ (see, e.g., \eqref{A_dict}) of finite size (cardinality). In view of \eqref{cost_gam2}, it is not difficult to see that to avoid the effect of dictionary mismatch, one needs to take a quantized dictionary $\bfA=[\bfa(\xi^\circ_1)\,, \dots, \bfa(\xi^\circ_G)]$ over a grid of sufficiently large size $G$ such that the cone $\clK_{\bfA}^+$ approximately exhausts the whole cone $\clK_\clA^+$ (note that $\clK_{\bfA}^+ \subseteq \clK_\clA^+$). In particular, in special cases where the cone $\clK_\clA^+$ has an algebraic representation, namely, it can be represented by finitely many linear constraints over the cone of PSD matrices (note that $\clK_\clA^+$ is a subset of PSD matrices), one can fully eliminate the mismatch by estimating $\Sigmam_\bfh$ via \eqref{cost_gam2} by replacing the constraint set $\clK_{\bfA}^+$ by $\clK_{\clA}^+$ and solving a finite-dim convex optimization problem.

\section{Simulation results}
In this section, we illustrate the validity of the decoupling property for $\ell_{2,1}$-LS via numerical simulations.

\subsection {Basic setup}
For the simulations, we will focus on the sparse channel estimation problem illustrated in Fig.\,\ref{array_fig}.  We assume that the signal dimension is $n=64$ (number of BS antennas) and signal samples $\clH=\{\bfh(s): s\in[T]\}$ are i.i.d. zero-mean Gaussian vectors. We consider a \textit{diffuse} propagation scenario (e.g., reflection from buildings, trees, etc.) where each signal sample $\bfh(s)$ consists of infinitely many scatterers of equal strength in a fixed angular range $\Xi^\circ \subset[-1,1]$. It is straightforward to show that in this case each $\bfh(s)$ is a Gaussian vector with a Toeplitz covariance matrix $\Sigmam_\bfh=\int \bfa(\xi) \bfa(\xi)^\herm \mu(d \xi)$ where  $\bfa(\xi)$ is the array response vector as in \eqref{array_response} and where $\mu(d\xi)$ is a uniform distribution over  $\Xi^\circ$ and denotes the angular power spread function of the signal samples,  which is not a priori known to the estimation algorithm.

\subsection{Effect of dictionary mismatch}\label{sec:sim_dict}
In this section, we investigate the effect of dictionary mismatch as one of the direct consequences of the decoupling property proved for $\ell_{2,1}$-LS (see, Section \ref{sec:dic_mismatch}). We consider a uniform angular power spread function over the angular range $\Xi^\circ=[-0.1,0.1]$. Note that we have intentionally decided to consider a continuous $\mu(d\xi)$ for simulations since, in contrast with discrete AoAs illustrated in Fig.\,\ref{array_fig} and signal model \eqref{chan_vec1}, for such a continuous distribution the underlying dictionary atoms in the sparse representation of signal are  not trivially known.
We consider a window consisting of $T=100$ signal samples and assume that each projection operator $\Psim(s)$, $s\in [T]$, is a random all-zero matrix with only one $1$ in each row  and samples $m=0.5 n$ ($50\%$ sampling) of the components of $\bfh(s)$ (antenna selection). We also assume that the locations of the sampled elements are random in each $\Psim(s)$ and vary i.i.d. across different $s\in [T]$.

We consider three different sparsifying dictionaries: the original continuum dictionary $\clA=\big \{\bfa(\xi): \xi \in [-1,1]\big \}$ as in \eqref{A_dict}, and discrete dictionaries $\clA_n=\big \{\bfa(\frac{2i}{n}-1): i \in [n]\big \}$ and  $\clA_{2n}=\big \{\bfa(\frac{i}{n}-1): i \in [2n]\big \}$ obtained by quantizing $\clA$ over a uniform grid of $\Xi$ of size $n$ and $2n$, respectively. One can check that for the dictionary $\clA$ consisting of complex exponential atoms as in \eqref{array_response}, the dictionary matrices corresponding to $\clA_n$ and $\clA_{2n}$ would be the standard Fourier matrix and the oversampled Fourier matrix with an oversampling factor $2$.  
To obtain the simulation results for $\clA_n$ and $\clA_{2n}$, we can directly solve $\ell_{2,1}$-LS in \eqref{l21_ls} with a regularization parameter $\varrho=\sigma^2$ to estimate $\widehat{\bfC}$ and, thus, $\widehat{\bfh}(s)$ for $s\in [T]$. We can also apply the decoupling property to first estimate the signal covariance matrix by optimizing $g(\gammam)$ in \eqref{cost_gam} and then apply the decoupled plug-in MMSE estimator to estimate the signal samples. In Appendix \eqref{cost_gam_app}, we derive a simple coordinate-wise descend algorithm for optimizing $g(\gammam)$.

For the continuum dictionary $\clA$, however, $\eqref{l21_ls}$ is not directly solvable since, roughly speaking,  it requires using a dictionary matrix with infinitely many columns and estimating a coefficient matrix $\widehat{\bfC}$ with infinitely many rows. Instead, we apply the decoupling property of $\ell_{2,1}$-LS and use  an indirect approach, namely, we solve  \eqref{cost_gam2} to estimate the signal covariance matrix and then apply the plug-in MMSE estimator \eqref{mmse_intro} to estimate the signal samples. To do this, we need to specify the cone generated by the dictionary $\clA$ in \eqref{cost_gam2}. Fortunately, for the dictionary of complex exponentials (array responses in \eqref{array_response}) considered here, it is well-known that the cone generated by the dictionary coincides with the cone of $n \times n$ PSD Hermitian Toeplitz matrices. Since this cone can be parameterized by $n$ complex numbers, the covariance estimation phase in \eqref{cost_gam2} is a finite-dim optimization problem and can be exactly solved. For the simulations, we approximate this optimization by applying the coordinate-wise descend algorithm in  Appendix \eqref{cost_gam_app} but we select the grid size large enough to make sure that the approximation is sufficiently good.

%
%
\begin{figure}[t]
\centering

\begin{tikzpicture}[scale=0.7]

\pgfplotsset{every axis/.append style={
           legend style={font=\large, legend cell align=right},
           label style={font=\large},
           title style={font=\large},
}}

\pgfplotsset{every tick label/.append style={font=\Large}}

\begin{axis}[%
scale only axis,
xmin=0,
xmax=40,
xlabel={Signal-to-Noise Ratio [dB]},
ymode=log,
ymin=0.0005,
ymax=1,
yminorticks=true,
xmajorgrids,
yminorgrids,
ylabel={NMSE},
title={Scaling of NMSE with SNR for $T=100$},
legend style={legend cell align=left, align=left, at={(0.36,0.285)}}
]
\addplot [mark size=3pt, mark=diamond, mark options={solid}, line width=1.5pt]
  table[row sep=crcr]{%
0	0.340294735552627\\
5	0.157902296123059\\
10	0.0658536392258477\\
15	0.0268686401187629\\
20	0.0112373464277831\\
25	0.00491520413422094\\
30	0.0022861055909257\\
35	0.0011462147069155\\
40	0.00062279842630662\\
};
\addlegendentry{MMSE}

\addplot [mark=triangle, color=blue, line width=1.5pt, mark options={solid}]
  table[row sep=crcr]{%
0	0.742748910290386\\
5	0.316271368564708\\
10	0.14698079183245\\
15	0.080731026236735\\
20	0.056073743115848\\
25	0.0473758863837262\\
30	0.0441557426163018\\
35	0.0432266626719735\\
40	0.0441346677133703\\
};
\addlegendentry{$\ell_{2,1}$-LS   $\clA_n$}

\addplot [mark=square, mark size=2pt, color=red, line width=1.5pt]
  table[row sep=crcr]{%
0	0.747417853182191\\
5	0.306811259641631\\
10	0.131906928184556\\
15	0.0616203274194548\\
20	0.0314582035781248\\
25	0.0196118302054447\\
30	0.0131520697803941\\
35	0.010803970883988\\
40	0.00954600250941627\\
};
\addlegendentry{$\ell_{2,1}$-LS   $\clA_{2n}$}

\addplot [mark=*, color=brown, line width=1.5pt, mark options={solid}]
  table[row sep=crcr]{%
0	0.745101731619024\\
5	0.302486819667183\\
10	0.133378630957394\\
15	0.0631008197416917\\
20	0.0328170603574184\\
25	0.0190671152699168\\
30	0.0134849537701604\\
35	0.0102177015284327\\
40	0.00986350398755082\\
};
\addlegendentry{$\ell_{2,1}$-LS   $\clA$}

%
%

\end{axis}
\end{tikzpicture}%

\caption{Comparison of the effect of dictionary mismatch on the performance of $\ell_{2,1}$-LS as an MMV estimator with that of the MMSE estimator for different dictionary sizes. 
}
\label{fig:mmv_mmse}
\end{figure}
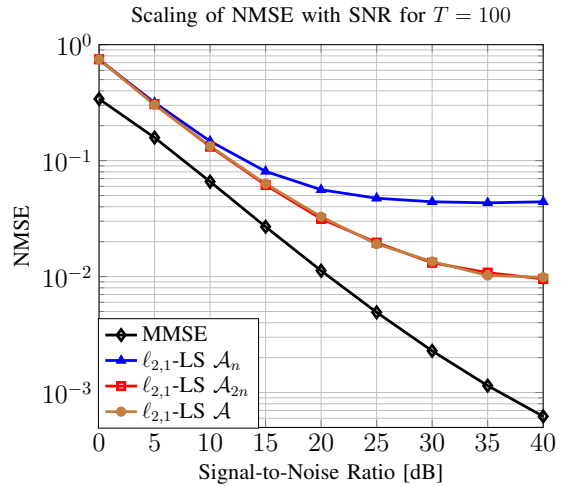

We run the simulations for $100$ i.i.d. realizations of the signal samples to estimate the \textit{Normalized Mean Square Error} (NMSE) of the $\ell_{2,1}$-LS MMV estimator. We compare the NMSE of the MMV estimator with that of the MMSE estimator, where we compute the latter by calculating the Toeplitz covariance matrix $\Sigmam_\bfh$ of the signal samples and applying \eqref{mmse_intro}.
Fig.\,\ref{fig:mmv_mmse} illustrates the simulation results. It is seen that MMV with the dictionary $\clA_{n}$ is highly mismatched and, due to this mismatch, its NMSE gets saturated at high \textit{Signal-to-Noise Ratios} (SNRs). Interestingly, by enlarging the dictionary to $\clA_{2n}$, the NMSE decreases dramatically. It is important to note that  $\ell_{2,1}$-LS over the larger dictionary $\clA_{2n}$ requires estimating twice more number of parameters than that over $\clA_{n}$, but it still yields a better estimation performance.

Another interesting observation one can make from Fig.\,\ref{fig:mmv_mmse} is that by moving form $\clA_{2n}$ to yet a larger (in fact infinite-dim) dictionary, the MMV performance keeps improving but only slightly. This can be qualitatively justified by the fact that the cone of $n\times n$ PSD Toeplitz matrices is well approximated by the cone generated by $\clA_{2n}$. Roughly speaking, this follows from Szeg{\"o} theorem \cite{russell1959u}, which states that the space of $n \times n$ PSD Toeplitz matrices is approximately  diagonalizable  (under an appropriate metric)  with the DFT matrix of dimension $n$ in the asymptotic regime as $n \to \infty$.

%
%
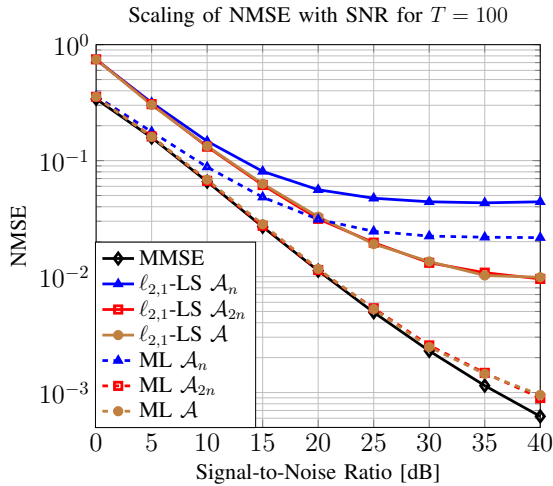
\begin{figure}[t]
\centering

\begin{tikzpicture}[scale=0.7]

\pgfplotsset{every axis/.append style={
           legend style={font=\large, legend cell align=right},
           label style={font=\large},
           title style={font=\large},
}}

\pgfplotsset{every tick label/.append style={font=\Large}}

\begin{axis}[%
scale only axis,
xmin=0,
xmax=40,
xlabel={Signal-to-Noise Ratio [dB]},
ymode=log,
ymin=0.0005,
ymax=1,
yminorticks=true,
xmajorgrids,
yminorgrids,
ylabel={NMSE},
title={Scaling of NMSE with SNR for $T=100$},
legend style={legend cell align=left, align=left, at={(0.36,0.485)}}
]
\addplot [mark size=3pt, mark=diamond, mark options={solid}, line width=1.5pt]
  table[row sep=crcr]{%
0	0.340294735552627\\
5	0.157902296123059\\
10	0.0658536392258477\\
15	0.0268686401187629\\
20	0.0112373464277831\\
25	0.00491520413422094\\
30	0.0022861055909257\\
35	0.0011462147069155\\
40	0.00062279842630662\\
};
\addlegendentry{MMSE}

\addplot [mark=triangle, color=blue, line width=1.5pt, mark options={solid}]
  table[row sep=crcr]{%
0	0.742748910290386\\
5	0.316271368564708\\
10	0.14698079183245\\
15	0.080731026236735\\
20	0.056073743115848\\
25	0.0473758863837262\\
30	0.0441557426163018\\
35	0.0432266626719735\\
40	0.0441346677133703\\
};
\addlegendentry{$\ell_{2,1}$-LS   $\clA_n$}

\addplot [mark=square, mark size=2pt, color=red, line width=1.5pt]
  table[row sep=crcr]{%
0	0.747417853182191\\
5	0.306811259641631\\
10	0.131906928184556\\
15	0.0616203274194548\\
20	0.0314582035781248\\
25	0.0196118302054447\\
30	0.0131520697803941\\
35	0.010803970883988\\
40	0.00954600250941627\\
};
\addlegendentry{$\ell_{2,1}$-LS   $\clA_{2n}$}

\addplot [mark=*, color=brown, line width=1.5pt, mark options={solid}]
  table[row sep=crcr]{%
0	0.745101731619024\\
5	0.302486819667183\\
10	0.133378630957394\\
15	0.0631008197416917\\
20	0.0328170603574184\\
25	0.0190671152699168\\
30	0.0134849537701604\\
35	0.0102177015284327\\
40	0.00986350398755082\\
};
\addlegendentry{$\ell_{2,1}$-LS   $\clA$}

\addplot [mark=triangle, color=blue, dashed, line width=1.5pt, mark options={solid}]
  table[row sep=crcr]{%
0	0.36014278576642\\
5	0.177305654218243\\
10	0.0880233466369256\\
15	0.0484025536286685\\
20	0.0311246978040278\\
25	0.0245598275156336\\
30	0.0223351744500674\\
35	0.0218392412326365\\
40	0.0216625189837455\\
};
\addlegendentry{ML $\clA_n$}

\addplot [mark=square, mark size=2pt, color=red, dashed, line width=1.5pt, mark options={solid}]
  table[row sep=crcr]{%
0	0.355016975566755\\
5	0.161656462149694\\
10	0.0669284746724812\\
15	0.0274192239937346\\
20	0.0113776865010595\\
25	0.00532863076982947\\
30	0.00254757801211201\\
35	0.0014774716122582\\
40	0.00089679075224339\\
};
\addlegendentry{ML $\clA_{2n}$}

\addplot [mark=*, color=brown, dashed, line width=1.5pt, mark options={solid}]
  table[row sep=crcr]{%
0	0.355600710511665\\
5	0.160161734879902\\
10	0.0683938667653514\\
15	0.0282698899914172\\
20	0.0116430744836451\\
25	0.00524697347151797\\
30	0.0024695408499463\\
35	0.00144904584248481\\
40	0.000953444380748618\\
};
\addlegendentry{ML $\clA$}

\end{axis}
\end{tikzpicture}%

\caption{Illustration of improvement of the performance of $\ell_{2,1}$-LS via  a better covariance estimation phase for different dictionaries and SNRs. 
}
\label{fig:mmv_ml_mmse}
\end{figure}

\subsection{Improving the $\ell_{2,1}$-LS via a better covariance estimation} \label{sec:sim_improve}
As explained so far in this paper, one can see $\ell_{2,1}$-LS MMV signal estimation as a plug-in MMSE estimator, which performs a coupled covariance estimation via optimizing \eqref{cost_gam} or equivalently \eqref{cost_gam2} and uses the resulting covariance estimate in a decoupled signal sample estimation. A natural consequence of this decoupling property is that one can potentially improve the performance of $\ell_{2,1}$-LS via improving the covariance estimation phase. In this section, our goal is to investigate this further. We consider a \textit{completely new} MMV estimator that has the same decoupled structure as $\ell_{2,1}$-LS with the difference that it estimates the signal covariance matrix using ML technique by optimizing $l(\gammam)$ in \eqref{cost_ml} rather than optimizing $g(\gammam)$ in \eqref{cost_gam} as applied in $\ell_{2,1}$-LS. Note that in contrast with $g(\gammam)$, which is  a convex function of $\gammam$ and can be exactly optimized, $l(\gammam)$ is generally a non-convex function of $\gammam$. In Appendix \ref{cost_ml_app}, we derive a simple coordinate-wise descend algorithm for minimzing $l(\gammam)$. This algorithm converges to local minimizer of $l(\gammam)$, which is not guaranteed to be the global minimizer. However, our simulations show that this algorithm always yields a reasonably good solution.

For simulations, we use a uniform $\mu(d\xi)$ over $\Xi^\circ=[-0.1, 0.1]$ and set $T=100$ as in Section \ref{sec:sim_dict}. Fig.\,\ref{fig:mmv_ml_mmse} illustrates the simulation results. It is seen that modifying the covariance estimation phase in the decoupled representation of $\ell_{2,1}$-LS indeed improves its performance dramatically. In fact, for sufficiently large dictionary sizes, our newly-proposed MMV estimator achieves a performance very close to that of  MMSE estimator even for sample size of $T=100$. 

This clearly shows that, by using the decoupling property proved in this paper and by simply modifying and improving the coupled covariance estimation phase, we may obtain a variety of MMV estimation algorithms with performances even much better than that of $\ell_{2,1}$-LS.



\section{Conclusions and further remarks}
In this paper, we proved a decoupling property for $\ell_{2,1}$-LS, which is a well-known and widely-adopted  algorithm for joint signal estimation in MMV setups.
Specifically, we proved that $\ell_{2,1}$-LS is equivalent to the following two-step procedure:
i)\,use the whole sketches obtained from all signal samples to estimate their covariance matrix (coupled phase), ii)\,plug in the resulting covariance estimate as the true covariance matrix into the MMSE estimator to reconstruct each signal sample from its corresponding sketch individually (decoupled phase).
Using the decoupling property, we could provide several insights on the performance of the $\ell_{2,1}$-LS as an MMV estimator. 
For example, we explained that, due to its  linear form as the plug-in MMSE estimator,  $\ell_{2,1}$-LS is unable to take advantage of the higher-oder  statistics of the signal and  is far from optimal for sparse non-Gaussian  processes.  Moreover, using this property, we were able to quantify the effect of dictionary mismatch on the performance of $\ell_{2,1}$-LS when the underlying sparsifying dictionary of the process is not precisely known. For this specific case, we provided numerical simulations to illustrate the validity of the  decoupling property. Also, motivated by the decoupling property of $\ell_{2,1}$-LS, we designed a totally new MMV algorithm by modifying the coupled covariance estimation phase and illustrated that it yields a better performance than $\ell_{2,1}$-LS.


%
%


\appendices
\section{Proofs}

\subsection{Proof of Theorem \ref{eq_thm}}\label{eq_thm_app}
The proof follows by extending Theorem 1 in \cite{steffens2016compact}.
The key observation is that for a vector $\bfu \in \bC^T$, the $\ell_2$-norm $\|\bfu\|_2$ can be written as the output of the following optimization 
\begin{align}\label{eq:norm_rep}
\|\bfu\|_2=\min_{\bfv \in \bC^T,\,d \in \bC:\, d\bfv =\bfu} \frac{1}{2}(\|\bfv\|_2^2 + |d|^2).
\end{align}
In particular, $\|\bfu\|_2=|d^*|^2$, where $d^*$ is the optimal solution of \eqref{eq:norm_rep}. Applying this to the rows of a $G\times T$ matrix $\bfC$, we can write the $\ell_{2,1}$-norm of $\bfC$ as follows
\begin{align}\label{l21_rep}
\|\bfC\|_{2,1}=\min_{\bfV\in \bC^{G\times  T},\, \bfD\in \bD:\, \bfD \bfV=\bfC} \frac{1}{2}(\|\bfV\|_{\sfF}^2 + \|\bfD\|_{\sfF}^2),
\end{align}
where $\|.\|_\sfF$ denotes the Frobenius norm, where $\bD$ denotes the space of $G\times G$ diagonal matrices with diagonal elements in $\bC$, and where $\bfD=\diag(d_1, \dots, d_G)\in \bD$. Furthermore, we have the identity
\begin{align}\label{key_id}
\|\bfC_{i,:}\|_2=|d_i^*|^2,
\end{align}
 with $d^*_i$ denoting the $i$-th diagonal element of the optimal solution $\bfD^*=\diag(d^*_1, \dots, d_G^*)$ in \eqref{l21_rep}. 

Using this result and  by replacing $\|\bfC\|_{2,1}$ in \eqref{cost_mmv} with \eqref{l21_rep}, we can transform the optimization problem $\widehat{\bfC}=\argmin_{\bfC\in \bC^{G\times T}} f(\bfC)$ in \eqref{cost_mmv} into 
\begin{align}\label{trans_eq}
(\bfV^*, \bfD^*)=\argmin_{\bfV\in \bC^{G\times  T},\,\bfD\in \bD} &\frac{1}{\varrho \sqrt{T}} \sum_{s=1}^T \|\Psim(s)\bfA (\bfD \bfV)_{:,s} - \bfx(s)\|_{2}^2 \nonumber\\
&+ \|\bfV\|_{\sfF}^2 + \|\bfD\|_{\sfF}^2.
\end{align}
For a fixed $\bfD$, the minimizing $\bfV$ as a function of $\bfD$ can be obtained via a least-squares optimization, where after replacing the solution in \eqref{trans_eq} and applying additional simplification, and using $\|\bfD\|_\sfF^2=\tr(\bfD \bfD^\herm)$,  we obtain the following optimization in terms of the remaining  variable $\bfD$
%
\begin{align}\label{l21_rep2}
\bfD^*=\argmin_{\bfD \in \bD} \frac{1}{T} \hspace{-1mm} \sum_{s\in [T]} \hspace{-1mm} \bfx(s)^\herm &\Big ( \Psim(s) \bfA \frac{\bfD \bfD^\herm}{\sqrt{T}} \bfA^\herm \Psim(s)^\herm + \varrho \bfI_m\Big)^{-1} \hspace{-2mm} \bfx(s)\nonumber\\
&+\tr(\frac{\bfD \bfD^\herm}{\sqrt{T}}).
\end{align}
This optimization can be reparameterized with $\Gammam=\frac{\bfD \bfD^\herm}{\sqrt{T}}=\diag(\frac{|d_1|^2}{\sqrt{T}}, \dots, \frac{|d_G|^2}{\sqrt{T}}) \in \bD_+$, where $\bD_+$ denotes the space of all $G\times G$ diagonal matrices with non-negative diagonal elements. With this reparameterization, we obtain 
\begin{align}\label{l21_rep3}
\widehat{\Gammam}=\argmin_{\Gammam \in \bD_+}   
\frac{1}{T}\sum_{s=1}^T \bfx(s)^\herm&  \Big ( \Psim(s) \bfA \Gammam \bfA^\herm \Psim(s)^\herm
+ \varrho \bfI_m\Big)^{-1} \bfx(s) \nonumber\\ 
&+ \tr(\Gammam).
\end{align}
It is immediately seen that the optimal solution $\widehat{\Gammam}$ (a diagonal matrix with non-negative diagonal elements $\widehat{\gammam} \in \bR_+^G$) coincides with the optimal solution of the cost function $g(\gammam)$ (with $\Gammam=\diag(\gammam)$) in \eqref{cost_gam}. Moreover, using the reparametrization $\Gammam=\tr(\frac{\bfD \bfD^\herm}{\sqrt{T}})$ in \eqref{l21_rep2}, the fact that the optimal solution $\widehat{\bfC}$ of $f(\bfC)$ in \eqref{cost_mmv}  is equivalently given by $\bfD^* \bfV^*$ in terms of the optimal solution $( \bfV^*, \bfD^*)$ of \eqref{trans_eq}, and also  the identity \eqref{key_id}, we have  
\begin{align}
\widehat{\gamma}_i=\frac{|d_i^*|^2}{\sqrt{T}}=\frac{\|\widehat{\bfC}_{i,:}\|_2}{\sqrt{T}},
\end{align}
where $\widehat{\gamma}_i$ denotes the $i$-th component of $\widehat{\gammam}$. This shows the underlying connection between the optimal solution of $f(\bfC)$ in \eqref{cost_mmv} and that of $g(\gammam)$ in \eqref{cost_gam} as stated in the theorem. This completes the proof.

\subsection{Proof of Theorem \ref{mmse_eq_thm}}\label{mmse_eq_thm_app}
Let $\widehat{\bfC}=\argmin_{\bfC \in \bC^{G \times T}} f(\bfC)$ be the optimal solution of \eqref{cost_mmv} and let $s \in [T]$ be a fixed number. From the optimality of $\widehat{\bfC}$, it results that $\widehat{\bfc}(s):=\widehat{\bfC}_{:,s}$ (the $s$-th column of $\widehat{\bfC}$) is the optimal solution of the function $f_s: \bC^G \to \bC$ defined by
\begin{align}\label{g_w}
f_s(\bfc(s)):=\frac{1}{2}\|\Psim(s) \bfA \bfc(s) - \bfx(s)\|_2^2 + \varrho \sqrt{T} \Big \|[\widehat{\bfC}_{ \sim s}, \bfc(s)]\Big \|_{2,1},
\end{align}
where $\widehat{\bfC}_{ \sim s}$ is the $G\times (T-1)$ matrix obtained by removing the $s$-th column of $\widehat{\bfC}$ and where $[\widehat{\bfC}_{ \sim s}, \bfc(s)]$ denotes the $G\times T$ matrix obtained by adjoining $\bfc(s)$ as the $s$-th  column to $\widehat{\bfC}_{ \sim s}$.  Note that if $\widehat{\bfC}_{i,:}=0$, namely, the $i$-th row of the optimal solution $\widehat{\bfC}$ is an all-zero vector,  then the optimal solution $\widehat{\bfc}(s)=(\widehat{c}_1(s), \dots, \widehat{c}_G(s))^\transp$ of \eqref{g_w} should satisfy $\widehat{c}_i(s)=0$. So we focus only on the nonzero rows of  $\widehat{\bfC}$. Let $i\in [G]$ be such a row. Then, taking the derivative of the cost function $f_s(\bfc(s))$ in \eqref{g_w} with respect to $c_i(s)$, we obtain 
\begin{align}\label{eq_w_opt}
\bfA_{:,i}^\herm \Psim(s)^\herm  \big(\Psim(s) \bfA  \widehat{\bfc}(s) -\bfx(s)\big)+ \varrho \sqrt{T} \frac{\widehat{c}_i(s)}{\|\widehat{\bfC}_{i,:}\|_2}={0}.
\end{align}
By introducing the diagonal matrix $\bfL$ with the diagonal elements $\bfL_{i,i}=\frac{\varrho \sqrt{T}}{\|\widehat{\bfC}_{i,:}\|_2}$, we can write \eqref{eq_w_opt} for the non-zero elements more compactly as 
\begin{align}\label{ll_dummy}
\bfA^\herm \Psim(s)^\herm \big (\Psim(s) \bfA \widehat{\bfc}(s) -\bfx(s)\big) +\bfL \widehat{\bfc}(s)={\bf 0}.
\end{align}
 Note that if the $i$-th row of $\widehat{\bfC}$ is zero, then $\bfL_{i,i}=\infty$, which forces the $i$-th component of $\widehat{\bfc}(s)$ to be zero. Thus, with some abuse of notation, \eqref{ll_dummy} also holds for the zero elements. Thus,
\begin{align}
\widehat{\bfc}(s)&=\big(\bfA^\herm \Psim(s)^\herm  \Psim(s)  \bfA + \bfL \big )^{-1}  \bfA^\herm \Psim(s)^\herm \bfx(s)\nonumber\\
&\stackrel{(a)}{=}\bfL^{-1} \bfA^\herm \Psim(s)^\herm \big ( \Psim(s) \bfA \bfL^{-1} \bfA^\herm\Psim(s)^\herm   + \bfI_m \big)^{-1} \bfx(s)\nonumber\\
&\stackrel{(b)}{=}\widehat{\Gammam} \bfA^\herm \Psim(s)^\herm \big ( \Psim(s)\bfA \widehat{\Gammam}\bfA^\dagger \Psim(s)^\herm  + \varrho \bfI_m \big )^{-1} \bfx(s),\nonumber
\end{align}
where in $(a)$ we applied the matrix inversion lemma, and where in $(b)$ we used Theorem \ref{eq_thm} and the fact that the diagonal element of $\bfL^{-1}$ are given by $(\bfL^{-1})_{i,i}=\frac{\|\widehat{\bfC}_{i,:}\|_2}{\varrho \sqrt{T}}$ and  replaced $\bfL^{-1}=\varrho \widehat{\Gammam}$. 
%
This yields the signal estimate $\widehat{\bfh}(s)=\bfA \widehat{\bfc}(s)$  as 
\begin{align*}
\widehat{\bfh}(s)=\bfA \widehat{\Gammam}\bfA^\dagger \Psim(s)^\herm \Big( \Psim(s)\bfA \widehat{\Gammam}\bfA^\dagger \Psim(s)^\herm  + \varrho \bfI_m\Big)^{-1} \bfx(s).
\end{align*}
This completes the proof.

\subsection{Coordinate-wise steepest descend for $g(\gammam)$}\label{cost_gam_app}
We consider the cost function $g(\gammam)$ as in \eqref{cost_gam}
\begin{align*}
g(\gammam)=\frac{1}{T}\sum_{s\in [T]}\hspace{-2mm}  \bfx(s)^\herm \big (\Psim(s)\bfA \Gammam \bfA^\herm \Psim(s)^\herm + \varrho \bfI_m \big )^{-1} \bfx(s) + \tr(\Gammam).
\end{align*}
We define $g_k(d)=g(\gammam + d \bfe_k)$ as the function $g(\gammam)$ restricted to its $k$-th components $\gamma_k$ where $\bfe_k$ denotes the $G$-dim canonical vector with $1$ at its $k$-th element and zero elsewhere. We also define $\Sigmam_s( \gammam)=\Psim(s) \bfA \Gammam \bfA^\herm \Psim(s)^\herm + \varrho \bfI_m$. Using $\Sigmam_s(\gammam+d \bfe_k)=\Sigmam_s(\gammam) + d \bfa_s(k) \bfa_s(k)^\herm$, where $\bfa_s(k)$ denotes the $k$-th column of $\Psim(s) \bfA$, and using the rank-1 update identity for the inverse of a matrix \cite{sherman1950adjustment}, we have 
\begin{align*}
\Sigmam_s(\gammam+d \bfe_k)^{-1}=\Sigmam_s(\gammam)^{-1} - \frac{d \Sigmam_s(\gammam)^{-1} \bfa_s(k) \bfa_s(k)^\herm  \Sigmam_s(\gammam)^{-1} }{1+ d \bfa_s(k) ^\herm \Sigmam_s(\gammam)^{-1}\bfa_s(k) }.
\end{align*}
Applying this identity, we can write $g_k(d)$ as follows:
\begin{align}
g_k(d)= g(\gammam) + d - \frac{1}{T}\sum_{s\in[T]} \frac{d \|\bfa_s(k)^\herm  \Sigmam_s(\gammam)^{-1} \bfx(s)\|^2}{1+ d \bfa_s(k) ^\herm \Sigmam_s(\gammam)^{-1}\bfa_s(k) },
\end{align}
where $g(\gammam)$ does not depend on $d$. Note that $g_k(d)$ is a convex function of $d$ with the derivative
\begin{align}
g_k'(d)= 1- \frac{1}{T}\sum_{s\in[T]} \frac{\|\bfa_s(k)^\herm  \Sigmam_s(\gammam)^{-1} \bfx(s)\|^2}{\big (1+ d \bfa_s(k) ^\herm \Sigmam_s(\gammam)^{-1}\bfa_s(k)\big )^2 }.
\end{align}
It is not difficult to see that since $\Sigmam_s(\gammam)$, $s \in [T]$, are all PSD matrices, $g'_k(d)$ is an increasing function of $d$ in the region $d \in (d_{\min}, \infty)$ with $d_{\min}=\max\{d_{\min,1}, d_{\min,2}\}$,  where $d_{\min,1}=- \frac{1}{\max_{s\in [T]} \bfa_s(k) ^\herm \Sigmam_s(\gammam)^{-1}\bfa_s(k)}$ and where $d_{\min,2}=-\gamma_k$ comes from the fact that $\gamma_k + d$ should be non-negative.

As a result, the minimizer $d^*$ of $g_k(d)$ is given by the unique solution of $g_k'(d^*)=0$ when it has a solution in $d \in (d_{\min},\infty)$ or by $d^*=d_{\min}$ when it does not have any solution. Moreover, in the former case, due to the convexity of $g_k(d)$ the derivative $g_k'(d)$ is an increasing function of $d$ and its root can be easily found by applying a bisection algorithm in the region $d \in (d_{\min},\infty)$.

\subsection{Coordinate-wise steepest descend for $l(\gammam)$}\label{cost_ml_app}
We consider the cost function $l(\gammam)$ as in \eqref{cost_ml}
\begin{align}
l(\gammam)&= \frac{1}{T} \sum_{s\in [T]} \bfx(s)^\herm \big (\Psim(s)\bfA \Gammam \bfA^\herm \Psim(s)^\herm + \sigma^2 \bfI_m \big )^{-1} \bfx(s) \nonumber \\
& \ \ \ + \log\, \det\big (\Psim(s)\bfA \Gammam \bfA^\herm \Psim(s)^\herm + \sigma^2 \bfI_m \big),
\end{align}
We define $l_k(d)=g(\gammam + d \bfe_k)$ as the function $g(\gammam)$ restricted to its $k$-th components $\gamma_k$ where $\bfe_k$ denotes the $G$-dim canonical vector with $1$ at its $k$-th element and zero elsewhere. We also define $\Sigmam_s( \gammam)=\Psim(s) \bfA \Gammam \bfA^\herm \Psim(s)^\herm + \varrho \bfI_m$. Using $\Sigmam_s(\gammam+d \bfe_k)=\Sigmam_s(\gammam) + d \bfa_s(k) \bfa_s(k)^\herm$, where $\bfa_s(k)$ denotes the $k$-th column of $\Psim(s) \bfA$, and using the rank-1 update identity for the inverse and the determinant  of a matrix \cite{sherman1950adjustment}, we have 
\begin{align*}
&\Sigmam_s(\gammam+d \bfe_k)^{-1}\hspace{-1mm} =\hspace{-1mm} \Sigmam_s(\gammam)^{-1} - \frac{d \Sigmam_s(\gammam)^{-1} \bfa_s(k) \bfa_s(k)^\herm  \Sigmam_s(\gammam)^{-1} }{1+ d \bfa_s(k) ^\herm \Sigmam_s(\gammam)^{-1}\bfa_s(k) },\\
&\det\big (\Sigmam_s(\gammam+d \bfe_k)\big )= \det\big (\Sigmam_s(\gammam)\big ) \\
& \hspace{35mm} \times \big(1+ d \bfa_s(k)^\herm \Sigmam_s(\gammam)^{-1} \Sigmam_s(\gammam) \big ).
\end{align*}
Applying this identity, we can write $g_k(d)$ as follows:
\begin{align}
g_k(d)&= g(\gammam) - \sum_{s\in[T]} \frac{d \|\bfa_s(k)^\herm  \Sigmam_s(\gammam)^{-1} \bfx(s)\|^2}{1+ d \bfa_s(k) ^\herm \Sigmam_s(\gammam)^{-1}\bfa_s(k) }\nonumber\\
&\ \ \ + \log(1+ d \bfa_s(k) ^\herm \Sigmam_s(\gammam)^{-1}\bfa_s(k)).
\end{align}
where $g(\gammam)$ does not depend on $d$. Note that $g_k(d)$ is generally a non-convex function of $d$ with the derivative
\begin{align}
g_k'(d)&= - \sum_{s\in[T]} \frac{\|\bfa_s(k)^\herm  \Sigmam_s(\gammam)^{-1} \bfx(s)\|^2}{(1+ d \bfa_s(k) ^\herm \Sigmam_s(\gammam)^{-1}\bfa_s(k))^2 }\nonumber \\
& \ \ \ \ + \frac{\bfa_s(k) ^\herm \Sigmam_s(\gammam)^{-1}\bfa_s(k)}{1+ d \bfa_s(k) ^\herm \Sigmam_s(\gammam)^{-1}\bfa_s(k)}.
\end{align}
Also, we have that $g'_k(d)>0$ for sufficiently large $d$ and that $g'_k(d_{\min}^+)=-\infty$ for $d_{\min}=- \frac{1}{\max_{s\in [T]} \bfa_s(k) ^\herm \Sigmam_s(\gammam)^{-1}\bfa_s(k)}$.  This implies that $g'_k(d)=0$ has at least one solution in the domain $d \in (d_{\min}, \infty)$. However, in general, it might have more than one solution. Here, we always use the largest solution and find it using the bisection method. Note that due to the structure of $g'_k(d)$ and especially the fact that $g'_k(d)>0$ for sufficiently large $d$, this solution would correspond to a local minimizer of $g_k(d)$. Denoting this solution by $d_0$ and using the fact that $d_0+\gamma_k$ should be non-negative, we set the optimal minimizer of $g_k(d)$ to $d^*=\max\{d_0, -\gamma_k \}$.

\balance
{\small
\bibliographystyle{IEEEtran}
\bibliography{references2}
}

\end{document}